\newcommand{\ene}{\mathbb{N}}
\def\cal#1{\fam2#1}
\newtheorem{theorem}{Theorem}[section]
\newtheorem{definition}[theorem]{Definition}
\newcommand{\ic}{\text{\rm ic}}
\newcommand{\icyes}{\text{\rm ic}_\text{\rm yes}}
\newcommand{\icno}{\text{\rm ic}_\text{\rm no}}
\newcommand{\lb}{\text{$[$}}
\newcommand{\eproof}{\hfill$\square$}
\title{Measuring communication complexity using\\ instance complexity with oracles}
\author{
\authorblockN{Armando Matos}
\authorblockA{LIACC - U.Porto\\
acm@dcc.fc.up.pt}
\and
\authorblockN{Andreia Teixeira}
\authorblockA{LIACC - U.Porto\\
andreiasofia@ncc.up.pt}
\and
\authorblockN{Andr\'{e} Souto}
\authorblockA{LIACC - U.Porto\\
andresouto@dcc.fc.up.pt}
}
\begin{document}

\maketitle


\begin{abstract}
We establish a connection between non-deterministic communication complexity and instance complexity, a measure of information based on algorithmic entropy. Let $\overline{x}$, $\overline{y}$ and $Y_1(\overline{x})$ be  respectively the input known by Alice, the
input known by Bob, and the set of all values of $y$ such that 
$f(\overline{x},y)=1$; a string is a witness of the non-deterministic
communication protocol iff it is a program $p$ that ``corresponds
exactly'' to the instance complexity $\ic^{f,t}(\overline{y}:Y_1(\overline{x}))$.


%
\end{abstract}



\section{Introduction} \label{intro}

In a general scenario of communication complexity there are two parties, Alice and Bob, and the goal is to find the minimal quantity of information, measured in number of bits, that they must exchange in order to compute the value of a given function of their inputs, $f:\,X\times Y\to\{0,1\}$. The instance complexity $\ic^{O,t}(x:A)$, is a rigorous measure of information, based on algorithmic entropy, is the length of the shortest program with access to the oracle $O$ that, in time $t$,
\begin{enumerate}
	\item answers correctly the question ``$x\in A$?'';
	\item does not ``lie'' about the set~$A$ (the program may however answer ``I don't know'' by outputting $\bot$).
\end{enumerate}
Thus, the communication complexity measures the communication costs while instance complexity is related with computational complexity. The objective of this paper is to establish a relationship between these two apparently unrelated measures of complexity.

Let $\overline{x}$ and $\overline{y}$ be the inputs of size $n$ of Alice and Bob respectively. Consider $Y_1(\overline{x})$, the set of all possible inputs $y$ given to Bob such that $f(\overline{x},y)=1$. We prove that, apart from a constant, $\max_{|x|=|y|=n}\{\icyes^{f,t}(\overline{y}:Y_1(\overline{x}))\}$, where $\icyes^{f,t}$ is a ``one-sided'' version of instance complexity is equal to the non-deterministic communication 
complexity $N^1(f)$; as a consequence of this result the maximum value
of $\ic^{f,t}(\overline{y}:Y_1(\overline{x}))$ over all inputs $(\overline{x},\overline{y})$ equals the non-deterministic communication complexity~$N(f)$. The main ingredient for the proof of this result is a protocol in which Alice uses the non-deterministic word $p$ as a program that eventually corresponds to
$\ic^{f,t}(\overline{y}:Y_1(\overline{x}))$. It is important to notice that neither Alice nor Bob alone (i.e., without communication and without the help of the oracle $f$) can compute $\ic^{f,t}(\overline{y}:Y_1(\overline{x}))$; the reason is that Alice only knows $\overline{x}$ and Bob only knows~$\overline{y}$.

We mention two previous works where the communication complexity has been analyzed in a non-standard way: the paper~\cite{BKVV} on individual communication complexity in which Kolmogorov complexity is used as the main analysis tool and~\cite{LR} where ``distinguishers'' are used to obtain bounds on communication complexity.

Our results use a bounded resource version of instance complexity with access to an oracle. Notice that, in the communication complexity scenario, the time of the computations performed by each party is irrelevant. The program $p$ used as a guess must have access to the description of $f$; however, the description of non-uniform functions $f$, which in general is infinite, can not be incorporated into a, necessarily finite, program $p$. Our solution to this problem is based on an oracle which, for each size $n$, gives to $p$ a description of $f$ restricted to inputs $x$ and $y$ of length $n$ (which is of course finite).\footnote{This fact improves the results proved on a previous version of this paper presented at CiE 2007.} We will show that the program $p$ used as a guess must have access to the description of $f$ and so, if $f$ is not uniform, and $p$ does not have access to its description for free, then $p$ would have to built in the description of $f$, which is only possible if its length is unlimited.

The rest of the paper is organized as follows. The next section contains some background and notation on communication complexity and instance complexity. In Section \ref{onesided} we study the one-sided protocols and in Section \ref{twosided} we focus on two sided protocols. These two sections contain the main results of this paper, namely Theorems~\ref{ccyes} and~\ref{ccgen}. Section \ref{kolgo2} contains some comments on the relationship between individual communication complexity and instance complexity.

\section{Preliminaries}\label{prel}
In the rest of this work, $\ene$ denotes the set of natural numbers (including~0). The alphabet that we will be using is $\{0,1\}$. A word is this alphabet is a sequence (possibly empty) of 0's and 1's and will be denoted by $x$, $y$ and $w$, possibly overlined. The length and the $i$-th bit of $x$ are denoted by $|x|$ and $x_i$ respectively.

\subsection{Communication complexity}\label{ccd}

We introduce the basic concepts of communication complexity. For more detailed information see, for example, \cite{KN}. Let $f:\{0,1\}^n\times\{0,1\}^n \to \{0,1\}$ be a boolean function. Alice and Bob want to determine the value $f(\overline{x},\overline{y})$ where $\overline{x}$ is only known by Alice and $\overline{y}$ is only known by Bob. To achieve the goal is imperative that Alice and Bob communicate. Non deterministic protocols $P$ for $f$ involve the usage of a ``guess'' which is given to Alice and Bob. These protocols $P$ satisfy, for $z=0$ or $1$ the following conditions

\vspace*{-3mm}
{\footnotesize\begin{eqnarray}
   [f(\overline{x},\overline{y})=z] & \Rightarrow & \label{c21}
      [\exists w:\, P(w,\overline{x},\overline{y})=z]     \\
\lb f(\overline{x},\overline{y})\not=z] & \Rightarrow & \label{c24}
      [\forall w:\, P(w,\overline{x},\overline{y})\neq z]
\end{eqnarray}}
For~$z\in\{0,1\}$ a ``one-sided'' protocol~$P^z$ has output either~$z$
or~$\bot$ and satisfies 
{\footnotesize\begin{eqnarray}
   [f(\overline{x},\overline{y})=z] & \Rightarrow & \label{c11}
      [\exists w:\, P^z(w,\overline{x},\overline{y})=z]  \\
\lb f(\overline{x},\overline{y})\neq z] & \Rightarrow & \label{c12}
      [\forall w:\, P^z(w,\overline{x},\overline{y})= \bot] 
\end{eqnarray}}
It is easy to build a non-deterministic protocol for~$f$ using the
one-sided protocols~$P^0$ and~$P^1$.

It is important to notice that at the end of any protocol, Alice and Bob must be convinced about the veracity of the value produced, in the sense that ``false guesses'' must be detected and rejected (output~$\bot$). This requirement corresponds to the ``$\forall\cdots$'' predicates above. In other words, this means that Alice and Bob do not trust the oracle. Notice that if both Alice and Bob trusted the oracle the problem would be trivially solved by sending the bit corresponding to the value of the function on their input. 

\begin{definition}[non-deterministic communication complexities]
\label{ndcpx}
Standard and individual (non-deterministic) communication complexities are denoted by~$N$ and~$\cal N$ respectively.
\begin{itemize}
\item [--] {\em Individual communication complexity of protocol $P$ with output set~$\{1,\bot$\}}: \label{iccp1} 
$ {\cal N}_P^1(f,\overline{x},\overline{y}) = \min_w\{|c(w)| : P(w,\overline{x},\overline{y})=1 \}$  where~$w$ is the guess and~$c(w)$ (``conversation'') is the sequence of bits exchanged between Alice and Bob when the guess is~$w$.  Notice that ${\cal N}_P^1(f,\overline{x},\overline{y})$ is only defined if~$f(\overline{x},\overline{y})=1$. Notice also that the behavior of the protocol~$P$ for the other inputs~$(x,y)$ is irrelevant.

\item [--]   {\em Individual communication complexity with output set~$\{1,\bot$\}}: ${\cal N}^1(f,\overline{x},\overline{y}) = \min_P\{{\cal N}_P^1(f,\overline{x},\overline{y})\}$ where the protocols $P$ considered for minimization are one-sided protocols with output set~$\{1,\bot\}$ for the function~$f$.

\item [--] {\em Communication complexity of protocol~$P$ with output set~$\{1,\bot$\}}: $N_P^1(f) =  \max_{\overline{x},\overline{y}}  \{{\cal N}_P^1(f,\overline{x},\overline{y})\}$.

\item [--] {\em Communication complexity of function~$f$ with output set~$\{1,\bot$\}}:
$N^1(f) = \min_{P} \{N_P^1(f)\}$.
\end{itemize}

The complexities~${\cal N}_P^0(f,\overline{x},\overline{y})$, $N_P^0(f)$, and $N^0(f)$, are defined in a similar way.

Define also ${\cal N}_P(f,\overline{x},\overline{y}) = {\cal N}_P^0(f,\overline{x},\overline{y})$ if $f(\overline{x},\overline{y})=0$ and ${\cal N}_P(f,\overline{x},\overline{y}) = {\cal N}_P^1(f,\overline{x},\overline{y})$ if $f(\overline{x},\overline{y})=1$; $N_P(f) = \log(2^{N_P^0(f)}+2^{N_P^1(f)})$; $N(f)=\min_P\{N_P(f)\}$.

A {\em witness} is a guess that causes the protocol to output a value different from~$\bot$.
\eproof
\end{definition}
%

The following result from~\cite{KN} proves that for every function there is a simple optimal non-deterministic protocol.
\begin{theorem}
For every boolean function~$f$ there is an optimal one-sided
non-deterministic protocol~$P$ for~$f$, that is, a protocol~$P$ such
that $N_P^1(f)=N^1(f)$, with the following form where
the witness~$w$, $1\leq w\leq m$, is the index of the
first rectangle~$R_w=A\times B$
containing~$(\overline{x},\overline{y})$ in the first minimum
1-cover:~
\begin{enumerate}
\item ~Alice guesses~$w$ and checks if~$\overline{x}\in A$.
\item ~Alice sends~$w$ to Bob.
\item ~Bob checks if~$\overline{y}\in B$.\eproof
\end{enumerate}

\end{theorem}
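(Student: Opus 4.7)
The plan is to reduce the claim to the standard characterization $N^1(f) = \lceil \log C^1(f) \rceil$, where $C^1(f)$ is the minimum number of combinatorial rectangles $A \times B \subseteq f^{-1}(1)$ needed to cover $f^{-1}(1)$ (a \emph{1-cover}). Let $m = C^1(f)$ and fix an optimal 1-cover $\{R_1, \ldots, R_m\}$ with $R_i = A_i \times B_i$, ordered canonically (the ``first'' minimum 1-cover referred to in the statement).

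For the upper bound, the three-step protocol in the statement works. If $f(\overline{x}, \overline{y}) = 1$, then $(\overline{x}, \overline{y}) \in R_w$ for some $w$, and with this guess both local checks ($\overline{x} \in A_w$ by Alice, then $\overline{y} \in B_w$ by Bob) succeed, yielding output $1$; this verifies condition~(\ref{c11}). If $f(\overline{x}, \overline{y}) = 0$, then no rectangle of the cover contains $(\overline{x}, \overline{y})$, so for every guess $w$ at least one of the two checks fails and the protocol outputs $\bot$, verifying condition~(\ref{c12}). The total communication is just Alice sending $w$, using $\lceil \log m \rceil$ bits.

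For the matching lower bound, I would invoke the standard \emph{rectangle property}. Given any one-sided protocol $P$ with $N_P^1(f) = c$, fix a guess $w$ and a conversation transcript $\pi$ of total length at most $c$ on which $P$ outputs $1$. The set $R_{w,\pi} = \{(x,y) : P \text{ on guess } w \text{ produces transcript } \pi\}$ is a combinatorial rectangle, by induction on the rounds of $\pi$: each message bit depends on the input of only one party (once $w$ is fixed), so membership factors as $(x \in A_{w,\pi}) \wedge (y \in B_{w,\pi})$. Condition~(\ref{c12}) forces $R_{w,\pi} \subseteq f^{-1}(1)$, while condition~(\ref{c11}) forces the union of these rectangles over all witness--transcript pairs to equal $f^{-1}(1)$. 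Since the number of witness--transcript pairs is bounded by $2^c$ under any reasonable joint encoding, we obtain a 1-cover of size at most $2^c$, hence $m \leq 2^c$ and $N^1(f) \geq \lceil \log m \rceil$.

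Combining the two bounds gives $N_P^1(f) = N^1(f) = \lceil \log m \rceil$, so the protocol described is optimal. The main subtlety I expect is in the rectangle property: one must apply it to the restriction of the non-deterministic protocol to a \emph{fixed} witness $w$ (not to the protocol as a whole); only after that restriction does the conversation behave deterministically and inherit the factorization into row- and column-projections on which the 1-cover argument depends.
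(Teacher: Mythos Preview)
The paper does not actually prove this theorem: it is quoted as a known result from~\cite{KN} and the statement is closed with the end-of-proof symbol immediately after the three protocol steps, with no argument supplied. So there is nothing in the paper to compare your proof against.

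That said, your argument is the standard Kushilevitz--Nisan one and is correct. Two remarks on the model used in this paper. First, in Definition~\ref{ndcpx} the cost is $|c(w)|$, the length of the \emph{conversation}, and the guess is given only to Alice. This means that for the lower bound you do not need to fix a witness~$w$: for each accepting transcript~$\pi$, Bob's accepting set $B_\pi$ depends only on~$\pi$, while on Alice's side you may take $A_\pi=\bigcup_w A_{w,\pi}$; the product $A_\pi\times B_\pi$ is still a $1$-monochromatic rectangle and there are at most $2^{c}$ accepting transcripts. This is slightly cleaner than your ``fix $w$ first'' formulation and sidesteps the issue that the number of witnesses is not a priori bounded. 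Second, the paper works only up to additive $O(1)$ (see the chain $N^1(f)=\log C^1(f)+O(1)$ in the proof of Theorem~\ref{ccyes}), so your $\lceil\log m\rceil$ can safely be replaced by $\log m+O(1)$.
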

Define the sets:
{\footnotesize
\[\begin{array}{lll}
X_0(\overline{y}) = \{x : f(x,\overline{y})=0\}, &
X_1(\overline{y}) = \{x : f(x,\overline{y})=1\},\\
Y_0(\overline{x}) = \{y : f(\overline{x},y)=0\}, &
Y_1(\overline{x}) = \{y : f(\overline{x},y)=1\}.
\end{array}\]}
Notice that Alice knows~$Y_0(\overline{x})$ and~$Y_1(\overline{x})$
while Bob knows~$X_0(\overline{y})$ and~$X_1(\overline{y})$. The
set~$Y_1$ is often mentioned in this paper. 


%
\begin{definition}
  A function is {\em uniform} if it is computed by a fixed (independent of the length of the input) algorithm.
\eproof
\end{definition}
Every function that can be described by an algorithm is uniform; for instance equality and parity are uniform functions. An example of a function which with almost certainty is not uniform is the random function defined as~$f(x,y)=0$ or~$f(x,y)=1$ with probability~1/2. Notice that in the case that $f$ is uniform we can built in the program $p$ a description of $f$ with a small cost (a constant number of bits) in the length of program. On the other hand, if $f$ is not uniform, then the description of $f$ is no longer a constant. To avoid programs of high length for non-uniform functions we allow the program to have oracle access to the description of $f$.
%
\subsection{Instance complexity}
\label{icd}
Instance complexity is a rigorous measure of information of a string relatively to the belonging to a set $A$. It is based on algorithmic entropy, which is, up to a constant term, equal to the expected value of Shannon entropy. We define several forms of instance complexity; for a more complete
presentation see~\cite{OKSW}. It is assumed that programs always
terminate, and output either~0, 1 or~$\bot$ (``don't know''). In the communication complexity the 'cost' is the number of bits exchanged between Alice and Bob who have unlimited computational power. In order to establish a relationship with instance complexity we use a time bounded version of instance complexity where it is assume that the time is sufficiently large.

\begin{definition}
A program~$p$ is {\em consistent} with a set~$A$ if 
$x\in A$ whenever $p(x)=1$ and $x\not\in A$ whenever $p(x)=0$.
\eproof
\end{definition}

\begin{definition}[time bounded instance complexity] \label{ic-def}
Let $t$ be a constructible time bound, $A$ be a set, $x$ an element and $p$ a total program with access to an oracle $O$. Consider the following conditions:
(C1) for all~$y$, $p(y)$ runs in time not exceeding~$t(|y|)$; (C2) for all~$y$, $p(y)$ outputs~0, 1 or~$\bot$, (C3) $p$ is consistent with~$A$ and (C4) $p(x)\neq\bot$.
The {\em $t$-bounded instance complexity with oracle access to $O$} of~$x$ relative to the set~$A$ is 

{\footnotesize
$$
  \ic^{O,t}(x:A) =  \min\left\{|p| : \begin{array}{c}\text{$p$ is a total program with oracle}\\\text{access to $O$ that satisfies (C1), }\\ \text{(C2), (C3) and (C4)}\end{array}\right\}
  $$}

We say that a program $p$ {\em corresponds} to~$\ic^{O,t}(x:A)$ if it satisfies conditions (C1), (C2), (C3) and (C4); if moreover~$|p|=\ic^{O,t}(x:A)$ we say that~$p$ {\em corresponds exactly} to~$\ic^{O,t}(x:A)$.\eproof
\end{definition}

Notice that in the communication complexity the time is not an important issue since Alice and Bob have unlimited power of computation and the communication complexity is measured in number of bits exchanged and not by the time required to transmit the information. The reason why we consider a time bound version of instance complexity is because Alice must have a reference for the time that she can expect for the program, that is given to her as a guess, to stop. This is a technical detail. Notice that if the possible guess $p$ is not a total program then there are data for which the $p$ will not stop and then Alice cannot compute the set of $y$ such that $f(\bar x, y)\not=\bot$, unless she can compute the Halting problem.\\
%


Relaxing the condition ``$p(x)\neq\bot$'' we get two weaker forms of
instance complexity:
\begin{definition} [inside instance complexity] \label{iic}
Let $t$ be a constructible time bound, $A$ be a set, $x$ an element and $p$ a total program with access to an oracle $O$. Consider the following conditions: (C1)~for all~$y$, $p(y)$ runs in time not exceeding~$t(|y|)$, (C2)~for all~$y$, $p(y)$ outputs either~1 or~$\bot$, (C3)~$p$ is consistent with~$A$ and (C4)~$x\in A\Rightarrow p(x)=1$.

The {\em $t$-bounded inside instance complexity with oracle access to $O$} of~$x$ relative to the set~$A$ is 

{\footnotesize$$
\icyes^{O,t}(x:A) = \min\left\{|p| : \begin{array}{c}\text{$p$ is a total program with oracle}\\\text{access to $O$ that satisfies (C1), }\\ \text{(C2), (C3) and (C4)}\end{array}\right\}
$$ }

A program $p$ {\em corresponds} to~$\icyes^{O,t}(x:A)$ if it satisfies
conditions (C1), (C2), (C3) and (C4); if moreover~$|p|=\icyes^{O,t}(x:A)$ we say that~$p$ {\em corresponds exactly} to~$\icyes^{O,t}(x:A)$.
\eproof
\end{definition}
\begin{definition} [outside instance complexity]
Let $t$ be a constructible time bound, $A$ be a set, $x$ an element and $p$ a total program with access to an oracle $O$. Consider the following conditions: (C1)~for all~$y$, $p(y)$ runs in time not exceeding~$t(|y|)$, (C2)~for all~$y$, $p(y)$ outputs either~0 or~$\bot$, (C3)~$p$ is consistent with~$A$ and (C4)~$x\not\in A\Rightarrow p(x)=0$.

The {\em $t$-bounded outside instance complexity with oracle access to $O$} of~$x$ relative to the set~$A$ is 
{\footnotesize
$$\icno^{O,t}(x:A) = \min\left\{|p| : \begin{array}{c}\text{$p$ is a total program with oracle}\\\text{access to $O$ that satisfies (C1), }\\ \text{(C2), (C3) and (C4)}\end{array}\right\}
$$}

A program~$p$ {\em corresponds} to~$\icno^{O,t}(x:A)$ if it satisfies
conditions (C1), (C2), (C3) and (C4); if moreover~$|p|=\icno^{O,t}(x:A)$ we say that~$p$ {\em corresponds exactly} to~$\icno^{O,t}(x:A)$.
\eproof
\end{definition}

Notice that if~$x\not\in A$ then $\icyes^{O,t}(x:A)$ is a constant (independent of~$x$) for a time bound $t$ (namely a constant), because the program~$p(x)\equiv\bot$ has fixed length and is consistent with every set; similarly if~$x\in A$ then $\icno^{O,t}(x:A)$ is a constant. Notice also that for every element~$x$ and set~$A$ we have $\icyes^{O,t}(x:A) \leq \ic^{O,t}(x:A)$ and $\icno^{O,t}(x:A) \leq \ic^{O,t}(x:A)$.

On the other hand, from a program~$p$
corresponding to~$\icyes^{O,t_1}(x:A)$ and a program~$p'$ corresponding
to~$\icno^{O,t_2}(x:A)$ we can define a 
program~$r$ as follows:
$r(x)=1$ if $p(x)=1$,
$r(x)=0$ if $p'(x)=0$ and
$r(x)=\bot$ otherwise,
concluding that
{\footnotesize$$\begin{array}{ll}
   \ic^{O,f(t_1,t_2)}(x:A)& \leq 
   \icyes^{O,t_1}(x:A) + \icno^{O,t_2}(x:A) +\\ & 
   O(\log(\min\{\icyes^{O,t_1}(x:A),\icno^{O,t_2}(x:A)\}))
\end{array}$$}
\vspace*{-2mm}

where the function~$f$ represents the time overhead
needed for the simulation of~$p(x)$ for~$t_1$ steps
followed by simulation of~$p'(x)$ for~$t_2$ steps; the
logarithmic term comes from the need to delimit~$p$ from~$p'$ in the
concatenation~$pp'$.

\section{One-sided protocols} \label{onesided}
To give an idea of the relationship between instance complexity and communication complexity we first analyze, in sub-section~\ref{neq}, the special case of function inequality defined by $NEQ(\overline{x},\overline{y})=1$ if and only if $\overline{x}\neq\overline{y}$. We show how to use programs corresponding to instance complexity as guesses of (optimal) non-deterministic protocols. This usage is later generalized to any function in sub-section~\ref{general}.

\subsection{Inequality: an optimal ``$\icyes$-protocol''}
\label{neq}
Consider the predicate NEQ and suppose that~$\overline{x}\neq\overline{y}$;
then for some~$i$, $1\leq i\leq n$, we have $\overline{x}_i\neq\overline{y}_i$. A possible program~$p_i$
corresponding to $\icyes^{NEQ,t}(\overline{y}:Y_1(\overline{x}))$ is 
$p_i(y) = 1$ if $y_i\neq\overline{x}_i$,
$p_i(y) = \bot$ if $y_i=\overline{x}_i$.
If the reader computes the set $Y'_1=\{y:p_i(y)=1\}$ it is easy to see that~$Y'_1\subset Y_1(\overline{x})$. So, if $p(\overline{y})=1$ and if $|p|$ is minimum, this program corresponds exactly to $\icyes^{NEQ,t}(\overline{y}:Y_1(\overline{x}))$ for some function $t$.

Consider now the following protocol~$P$ for $NEQ$ where~$t$ is a time
bound sufficiently large (see more details in
sub-section~\ref{general}). Alice receives a word~$p$ as a guess; $p$ may eventually be the program~$p_i$ above. Then she runs~$p(y)$ for
every~$y\in Y$ until the program halts or until~$t(|y|)$ steps have
elapsed. If~$p(y)$ does not halt in time~$t(|y|)$, the word~$p$ is not
a valid witness and the protocol halts. Otherwise Alice defines the
set $Y'_1=\{y:\,p(y)=1\}$. If $Y'_1\subseteq Y_1(\overline{x})$, i.e., if~$p$ is consistent with~$Y_1(\overline{x})$, she sends~$p$ to Bob, otherwise outputs~$\bot$ and halts. Bob tests if~$p(\overline{y})=1$; if yes, outputs~1, otherwise outputs~$\bot$.

\noindent\underline{Correctness conditions:}

\begin{enumerate}
\item If~$\overline{x}\neq \overline{y}$, there is a witness~$p$ that corresponds to~$\icyes^{NEQ,t}(\overline{y},Y_1(\overline{x}))$.

We have~$\overline{x}_i\neq\overline{y}_i$ for some~$i$, $0\leq i\leq n$. Then, if~$p$ happens to be the program~$p_i$ above, the protocol~$P$ outputs~1 so~$p$ corresponds to~$\icyes^{NEQ,t}(\overline{y},Y_1(\overline{x}))$, that is, we must have $Y'_1$ consistent with~$Y_1(\overline{x})$ (verified by Alice) and~$p(\overline{y})=1$ (verified by Bob).

\item If a guess is wrong, the output is~$\bot$.

If the guess is wrong, then either some~$p(y)$ does not run in time $t$ or~$p$ is not consistent with~$Y_1(\overline{x})$ or~$p(\overline{y})=\bot$;
All these cases are possible to detect by Alice and Bob.

\item  If~$\overline{x}= \overline{y}$, no guess~$p$ can cause output~1.

This follows directly from the definition of the protocol. 
\end{enumerate}

\noindent\underline{Complexity:}

The length of~$p_i$ need not to exceed $\log n+O(1)$ and $\max_{0\leq i\leq n}\{|p_i|\}$ is~$\log n +O(1)$. Thus the complexity of the protocol~$P$ is~$\log(n)+O(1)$. But the non-deterministic communication complexity of $NEQ$ is also $\log n+O(1)$ (see~\cite{KN}), thus the protocol is optimal.

\subsection{``$\icyes^{f,t}$-protocols'' are optimal}
\label{general}

In this section we prove the main theorem of this paper by showing how to use a program corresponding to $t$-bounded inside instance complexity as a guess in a nondeterministic protocol. In the general case, the function~$f$, which is known by Alice and Bob, is arbitrarily complex; therefore the description of~$f$ can not be included into an ``instance complexity program''~$p$ unless~$\lim_{n\to\infty}|p|=\infty$. But the scenario is different if we give the program $p$ free access to the description of the function $f$.

\begin{theorem}[$\icyes^{f,t}$-protocols are optimal]\label{ccyes}
  Let~$f$ be an arbitrary function. There is a computable
  function~$t(n)$ such that
{\footnotesize  \begin{eqnarray}
    \label{ccyest}
    N^1(f) = \max_{|x|=|y|=n}\{\icyes^{f,t}(y:Y_1(x))\}  + O(1)
  \end{eqnarray}}
\end{theorem}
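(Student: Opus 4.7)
The plan is to prove the two inequalities $N^1(f)\le\max_{|x|=|y|=n}\{\icyes^{f,t}(y:Y_1(x))\}+O(1)$ and $\max_{|x|=|y|=n}\{\icyes^{f,t}(y:Y_1(x))\}\le N^1(f)+O(1)$ for a single computable $t$, and then combine them.

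For the upper bound I would generalise the NEQ construction of sub-section~\ref{neq}. Alice receives a guess $p$, understood as a candidate program corresponding to $\icyes^{f,t}(\overline{y}:Y_1(\overline{x}))$ with oracle access to $f$; Alice can answer every oracle query because in the communication model she knows $f$ in full, and here she only needs its restriction to length-$n$ inputs. She simulates $p(y)$ for every $y$ of length~$n$, aborting with output $\bot$ if any simulation exceeds $t(|y|)$ steps, and at every $y$ with $p(y)=1$ she verifies $f(\overline{x},y)=1$, thus checking consistency of $p$ with $Y_1(\overline{x})$. If all checks succeed she sends $p$ to Bob, who simulates $p(\overline{y})$ and outputs $1$ iff $p(\overline{y})=1$. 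The three correctness conditions of sub-section~\ref{neq} then go through verbatim: existence of a witness when $f(\overline{x},\overline{y})=1$ follows from the definition of $\icyes^{f,t}$; rejection of wrong guesses follows from Alice's and Bob's tests; and when $f(\overline{x},\overline{y})=0$, any $p$ passing Alice's consistency check satisfies $\{y:p(y)=1\}\subseteq Y_1(\overline{x})$, so $p(\overline{y})=1$ would force $\overline{y}\in Y_1(\overline{x})$, a contradiction. Since the conversation consists solely of $p$, we get $N_P^1(f,\overline{x},\overline{y})\le\icyes^{f,t}(\overline{y}:Y_1(\overline{x}))$; taking the maximum over $(\overline{x},\overline{y})$ with $f(\overline{x},\overline{y})=1$ yields the desired bound on $N^1(f)$.

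For the lower bound I would invoke the theorem from~\cite{KN} quoted earlier: fix the optimal one-sided protocol based on the first minimum $1$-cover $R_1,\ldots,R_m$, with $R_w=A_w\times B_w$ and $N^1(f)=\lceil\log m\rceil$. Given any $(\overline{x},\overline{y})$ with $f(\overline{x},\overline{y})=1$, let $w$ be the smallest index with $(\overline{x},\overline{y})\in R_w$ and build a program $p_w$ consisting of a fixed decoder followed by the $\lceil\log m\rceil$-bit encoding of $w$. Given input $y$ and oracle access to $f$, the decoder reconstructs the canonical minimum $1$-cover, reads $w$ from its own suffix, extracts $B_w$, and outputs $1$ if $y\in B_w$ and $\bot$ otherwise. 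Because $A_w\times B_w$ is a $1$-rectangle and $\overline{x}\in A_w$, we have $B_w\subseteq Y_1(\overline{x})$, so $p_w$ is consistent with $Y_1(\overline{x})$; and $\overline{y}\in B_w$ gives $p_w(\overline{y})=1$. Hence $\icyes^{f,t}(\overline{y}:Y_1(\overline{x}))\le|p_w|\le\lceil\log m\rceil+O(1)=N^1(f)+O(1)$, the constant absorbing the fixed decoder.

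The main obstacle is choosing a single computable $t$ valid for both directions. The upper bound is monotone in $t$: increasing $t$ can only lower $\icyes^{f,t}$ and thus strengthen $N^1(f)\le\icyes^{f,t}+O(1)$. The lower bound, by contrast, requires $t(n)$ to exceed the running time of the decoder $p_w$, which is a computable function of $n$ because reconstructing the minimum $1$-cover from the oracle's description of $f$ restricted to length $n$ takes only finitely many steps. Taking $t(n)$ to dominate this reconstruction time yields a single computable time bound for which both inequalities hold, giving the equality modulo an $O(1)$ additive constant.
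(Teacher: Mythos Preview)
Your proposal is correct and follows essentially the same approach as the paper: the upper bound via the consistency-checking protocol (your construction is the paper's Figure~\ref{wit-ic-yes} protocol~$P'$), and the lower bound via the rectangle-index program $p_w$ built from the minimum 1-cover with oracle access to~$f$ (the paper's Figure~\ref{pprog}). Your presentation is slightly more streamlined in that you go directly to the two inequalities rather than routing the lower bound through the auxiliary protocol of Figure~\ref{side1}, but the core constructions and the handling of the time bound~$t(n)$ are the same.
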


\begin{proof}

Let~$p$ be the non-deterministic word given to Alice by the third entity;
the protocol~$P$ is described in Figure~\ref{side1}. Notice that the protocol specifies that Alice should interpret~$p$ as a program and execute~$p$ for all $y$ for $t(|y|)$ steps.

\vspace*{-2mm}
\begin{figure}[h]
\begin{center}

{\footnotesize
\begin{tabular}{|p{0.1 cm}p{8 cm}|}  \hline
  {\bf Alice:} &                                              \\
  &{\tt   Receive program~$p(y)$ (as a possible witness)}     \\
  &{\tt   Test if, for every~$y\in Y$, $p$ halts and produces $\bot$ or $1$ in time $t(n)$}\\
  &{\tt   If not, output~$\bot$ and halt}                     \\ 
  &{\tt   Compute the set $B=\{y:p(y)=1\}$}                   \\
  &{\tt   Using the oracle access to the description of $f$}\\
  &{\tt \hspace{2mm} find the set of smallest 1-covers}                  \\
  &{\tt   Select the first (in lexicographic order) such }\\ 
  &{\tt \hspace{2mm} cover~$\langle R_1, R_2,\ldots R_m\rangle$}         \\
  &{\tt   Select a rectangle~$R_i = A\times B$ from that cover}\\
  &{\tt   \hspace{2mm}
          where~$B\subseteq Y$ is the set computed above}     \\
  &{\tt   \hspace{2mm}
          As the cover is minimum, there can be at most one such rectangle. If there is none, output~$\bot$ and halt}\\

  &{\tt   Test if~$\overline{x}\in A$}                        \\
  &{\tt   If not, output~$\bot$ and halt}                      \\
  &{\tt   Send~$p$ to Bob}                                    \\ \hline
  {\bf Bob:} &                                                \\
  &{\tt   Verify if~$p(\overline{y})=1$}                      \\
  &{\tt   If yes, output~1 and halt}                          \\
  &{\tt   Output~$\bot$ and halt}                             \\ \hline
\end{tabular}
}
\vspace*{-5mm}
\end{center}
\caption{\footnotesize
  A family of one-sided non-deterministic protocols~$P$.
  The guess is based on a program~$p$ that corresponds to~$\icyes^{f,t}(\overline{y}:Y_1(\overline{x}))$. 
}  
\label{side1}
\end{figure}
\vspace*{-0mm}
The program~$p$, being an arbitrary guess, may behave in many different ways; in particular, if~$f(\overline{x},\overline{y})=1$, the behavior can be described as follows:


\begin{figure}[h]
\begin{center}
{\footnotesize
\begin{tabular}{|p{8.2 cm}|}                                      \hline
  {\bf Program~$p$, input~$y$:}                               \\ \hline
  {\tt   From the description of $f$ (which is given by the oracle) and $i$:}                \\
  {\tt   \hspace{1mm} Find the set~$S_1$ of smallest 1-covers}\\
  {\tt   \hspace{1mm} 
         Select the first (in lexicographic order) cover}\\
   {\tt \hspace{3mm}  $\langle R_1, R_2,\ldots R_m\rangle\in S_1$}     \\
  {\tt   \hspace{1mm}  Select rectangle~$R_i = A\times B$ 
                    in that cover}                            \\
  {\tt   With input~$y$, output}                              \\
  {\tt   \hspace{2mm} $p(y)=1$    if ~$y\in B$}               \\
  {\tt   \hspace{2mm} $p(y)=\bot$ otherwise}                  \\ \hline
\end{tabular}
\vspace*{-3mm}
}\end{center}

\caption{\footnotesize A possible behavior of the program~$p$ which
  may cause the protocol~$P$ (see Figure~\ref{side1}) to
  output~1.  A string~$p$ with this behavior can be specified in
  length~$\log m$.  The existence of this program, which has
  length~$\log m$ where~$m$ is the size of the minimum covers,
  justifies the step between equation~(\ref{logm}) and
  inequality~(\ref{max}).
}
\label{pprog}
\end{figure}
\vspace*{-0mm}

  If~$i$ is chosen so that~$(\overline{x},\overline{y})\in R_i$
  (if~$f(\overline{x},\overline{y})=1$ there is at least one such~$i$,
  otherwise there is none) then~$p$ is consistent
  with~$Y_1(\overline{x})$ and~$p(\overline{y})=1$. Then $|p|\geq
  \icyes^{f,t}(\overline{y}:Y_1(\overline{x}))$. Moreover, if~$p$ is not ``correct'', that fact can be detected by Alice or by Bob; thus, conditions~(\ref{c11}) and~(\ref{c12}) (see page~\pageref{c11}) are verified.

How much time~$t(n)$ must Alice run~$p(y)$ (for each~$y$) so that, there is at least a witness for every pair~$(\overline{x},\overline{y})$ with
$f(\overline{x},\overline{y})=1$? It is possible to obtain an upper bound~$t(n)$ in a constructive way by detailing and analyzing the algorithm that the witness~$p$ should implement, see Figure~\ref{pprog}. In fact, $t(n)$ is a computable function that Alice can determine.\footnote{Notice that the time required must be, at least, exponential since the determination of the minimal cover can be determine in exponential time.}
   Suppose now that~$f(\overline{x},\overline{y})=1$. If the protocol
   accepts~$(\overline{x},\overline{y})$ with guess~$p$,
   we have~$|p|\leq \log m + O(1)$ and
   $\max_{|\overline{x}|=|\overline{y}|=n}\{|p|\}\leq \log m + O(1)$. Thus
{\footnotesize  \begin{eqnarray}
       N^1(f)  &=&   \log C^1(f) + O(1) \\ 
               &=&   \log m  + O(1)     \label{logm}     \\
            &\geq&   \max_{|\overline{x}|=|\overline{y}|=n}\{|p|\}
                                + O(1)  \label{max}      \\
            &\geq&   \max_{|\overline{x}|=|\overline{y}|=n}
                     \{\icyes^{f,t}(\overline{y}:Y_1(\overline{x}))\}
                                + O(1) \label{last}
  \end{eqnarray}}
  On the other hand, {\em there exists} a non-deterministic protocol
  with complexity  $\max_{|\overline{x}|=|\overline{y}|=n}\{
    \icyes^{f,t}(\overline{y}:Y_1(\overline{x}))\}+O(1)$; 
  this is the protocol of Figure~\ref{wit-ic-yes}.
Notice that program~$p$ can be {\em any} total program running in time $t$ which is consistent with~$Y_1(\overline{x})$ and such that~$p(\overline{y})=1$ (and, if $f(\overline{x},\overline{y})=1$, there is at least one such program, as we have seen above); thus it can be the shortest such program, $|p|=\icyes^{f,t}(\overline{y},Y_1(\overline{x}))$. Taking the maximum over all~$\overline{x}$ and $\overline{y}$ with
$|\overline{x}|=|\overline{y}|=n$ (see Definition~\ref{ndcpx}) we get
$
   N^1(f) \leq \max_{|\overline{x}|=|\overline{y}|=n}\{
     \icyes^{f,t}(\overline{y}:Y_1(\overline{x}))\} + O(1)
$ 
because~$N^1(f)$ is the smallest complexity among all the protocols
for~$f$. Combining this result with inequation~(\ref{last}) we 
get~$N^1(f) =  \max_{|\overline{x}|=|\overline{y}|=n}\{
    \icyes^{f,t}(\overline{y}:Y_1(\overline{x}))\}+O(1)$.
\end{proof}

\vspace*{-5mm}
\begin{figure}[bht]
\begin{center}
{\footnotesize
\begin{tabular}{|p{0.1 cm}p{8 cm}|}  \hline
  {\bf Alice:}  &  \\
    &  {\tt Receive program~$p(y)$ (as a possible witness)}     \\
    &  {\tt Test if, for every~$y$, 
                           $p(y)$ halts in time $t$}\\
    &  {\tt If not, output~$\bot$ and halt}  \\
    &  {\tt Test if~$\{y:p(y)=1\}\subseteq Y_1(\overline{x})$}\\
    &  {\tt ($p$ is consistent with $Y_1(\overline{x})$)}    \\
    &  {\tt If not, output~$\bot$ and halt}  \\
    &  {\tt Send~$p$ to Bob}         \\
  {\bf Bob:}  &  \\
    &  {\tt Compute~$r=p(\overline{y})$ and test if~$r=1$}  \\
    &  {\tt If not, output~$\bot$ and halt}  \\
    &  {\tt Output~1}                        \\
  {\bf Alice:}  &  \\
    &  {\tt Output 1}     \\ \hline
\end{tabular}
\vspace*{-5mm}
}\end{center}
\caption{\footnotesize
  A family of one-sided non-deterministic
  protocols~$P'$. The guess may be any program~$p$ that
  corresponds to~$\icyes^{f,t}(\overline{y}:Y_1(\overline{x}))$, that is
  $p$ must satisfy only~$\{y:p(y)=1\}\subseteq Y_1(\overline{x})$
  and~$p(\overline{y})=1$.
} 
\label{wit-ic-yes}
\end{figure}
\vspace*{-5mm}

\subsection*{A note on the uniformity condition}\label{unif}
At first it may not be obvious why the validity of equality~(\ref{ccyest}) of Theorem~\ref{ccyes} depends on the fact that $p$ has access to the description of $f$. In what follows we show that if $f$ is not uniform then~(\ref{ccyest}) may be false. Notice that if we do not allow access to an oracle the result is valid for uniform functions since the description of $f$ in this cases requires a constant number of bits and hence can be built in the program that is used as a guess with a cost of a constant in the number of bits. The idea to prove that the result is false without oracle access is to use the Kolmogorov complexity as a tool.
Denote by~$C(x)$ the (plain) Kolmogorov complexity of~$x$ which
is defined as $C(x)=\min\{|p|:U(p)=x\}$ where~$U$ is some fixed
universal Turing machine, see~\cite{LV}. 

Consider a monochromatic cover of a non uniform function such that
(i)~the number~$m$ of rectangles in the cover is very small and
(ii)~the horizontal side~$B$ of the first rectangle in the cover has a
Kolmogorov random length, $C(|B|)\approx n$. The length~$B$ can be
obtained from~$p$, thus $C(|B|)\leq C(p)+O(1)$ which
implies~$C(p)\geq n+O(1)>\!\!> \log m$; thus the step
(\ref{logm})$\,\to\,$(\ref{max}) in the proof is not valid.



\section{Two-sided protocols}   \label{twosided}
Now we consider the two-sided protocols for non-deterministic
communication complexity. Similarly to the result of the previous section we show that there are optimum protocols whose guesses correspond exactly 
to~$\ic^{f,t}(\overline{y}:Y_1(\overline{x}))$.

\begin{theorem} \label{ccgen}
  Let~$f$ be any function. There exists a computable function $t$ such that
{\footnotesize  $$
    N(f) = \max_{|\overline{x}|=|\overline{y}|=n}\{\ic^{f,t}(y:Y_1(x))\}  + O(1)
  $$}
\end{theorem}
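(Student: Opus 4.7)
The plan is to prove both inequalities $N(f)\geq\max\{\ic^{f,t}(\overline{y}:Y_1(\overline{x}))\}+O(1)$ and $N(f)\leq\max\{\ic^{f,t}(\overline{y}:Y_1(\overline{x}))\}+O(1)$, following closely the strategy of Theorem~\ref{ccyes}. The essential difference is that a program corresponding to $\ic^{f,t}$ may output $0$ as well as $1$ or $\bot$, so both the protocol used in the ``$\leq$'' direction and the cover-based construction in the ``$\geq$'' direction must be upgraded to handle two-sided answers.

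For the upper bound, I would adapt the protocol of Figure~\ref{wit-ic-yes} by letting Alice test \emph{both} $\{y:p(y)=1\}\subseteq Y_1(\overline{x})$ and $\{y:p(y)=0\}\subseteq Y_0(\overline{x})$ (the two halves of the consistency condition of Definition~\ref{ic-def}), after first checking that $p$ halts on every $y$ in time $t(|y|)$ with output in $\{0,1,\bot\}$. She then sends $p$ to Bob, who evaluates $r=p(\overline{y})$ and outputs $r$ if $r\in\{0,1\}$, otherwise $\bot$. A guess $p$ corresponding exactly to $\ic^{f,t}(\overline{y}:Y_1(\overline{x}))$ passes all tests and forces $r=f(\overline{x},\overline{y})$ by consistency, while any inconsistent or non-terminating $p$ is detected and rejected; choosing $p$ to be a shortest such program gives the bound.

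For the lower bound, I would combine the two optimal minimum covers of $f$. Let $\langle R^1_1,\ldots,R^1_{m_1}\rangle$ and $\langle R^0_1,\ldots,R^0_{m_0}\rangle$ be the first minimum $1$-cover and the first minimum $0$-cover of $f$ in lexicographic order, so that $N(f)=\log(m_0+m_1)+O(1)$. Index the concatenated list of $m_0+m_1$ rectangles by a single integer $k$ and let $p_k$ be the program that, using its oracle for $f$, reconstructs both covers, extracts the $k$-th rectangle $A\times B$, and on input $y$ outputs $1$ if $y\in B$ and the rectangle came from the $1$-cover, $0$ if $y\in B$ and the rectangle came from the $0$-cover, and $\bot$ otherwise. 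For every pair $(\overline{x},\overline{y})$ some $k$ produces a monochromatic rectangle containing it; then $\overline{x}\in A$ forces either $B\subseteq Y_1(\overline{x})$ or $B\subseteq Y_0(\overline{x})$ accordingly, so $p_k$ is consistent with $Y_1(\overline{x})$ and $p_k(\overline{y})\neq\bot$. Thus $p_k$ corresponds to $\ic^{f,t}(\overline{y}:Y_1(\overline{x}))$ and has length $\lceil\log(m_0+m_1)\rceil+O(1)=N(f)+O(1)$.

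The main obstacle, as in Theorem~\ref{ccyes}, is the role of oracle access to $f$: without it, the description of $f$ restricted to inputs of length $n$ would have to be coded into $p_k$, breaking the $\log(m_0+m_1)+O(1)$ bound whenever $f$ is non-uniform; the uniformity remark of Section~\ref{onesided} carries over verbatim. A secondary technical point is exhibiting a single computable time bound $t(n)$ that suffices both for the programs $p_k$ used in the lower bound and for Alice's simulation in the upper-bound protocol; as in Theorem~\ref{ccyes} this is obtained by inspecting the exponential-time procedure that locates the minimum covers from the oracle description of $f$.
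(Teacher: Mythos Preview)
Your proposal is correct and follows essentially the same route as the paper: the lower bound via the program $p_k$ indexed by a rectangle in the concatenation of a minimum $0$-cover and a minimum $1$-cover is exactly the construction of Figure~\ref{pside2}, and your upper-bound protocol is the natural two-sided analogue of Figure~\ref{wit-ic-yes} (the paper also presents the heavier protocol of Figure~\ref{side2}, analogous to Figure~\ref{side1}, but the inequality chain only needs what you wrote). Your remarks on the oracle access to~$f$ and on the choice of the computable time bound~$t(n)$ match the paper's treatment as well.
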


The proof of this Theorem is similar to the proof of Theorem~\ref{ccyes}; we make only a few observations. The reader should compare Figures~\ref{side1} and~\ref{pprog} with Figures~\ref{side2} and~\ref{pside2} respectively. The main difference in the proof is that we have now to consider a minimum cover of 0-rectangles and a minimum cover of 1-rectangles. Denote by~$m=C^0(f)$ and~$m'=C^1(f)$ the size of those covers; the witness (program)~$p$ has a description with length~$\log(m+m')+O(1)$.  It is not difficult to verify the correctness of conditions~(\ref{c21}) to~(\ref{c24}), see page~\pageref{c24}.

\begin{figure}[h]
\begin{center}
{\footnotesize
\begin{tabular}{|p{0.1 cm}p{8 cm}|}  \hline
  {\bf Alice:} &                                              \\
  &{\tt   Receive program~$p(y)$ (as a possible witness)}     \\
  &{\tt   Test if, for every~$y\in Y$, $p(y)$ halts in $t(n)$ steps with output~0, 1 or~$\bot$}          \\
  &{\tt   Compute the set $B=\{y:p(y)\neq\bot\}$}             \\
  &{\tt   Test if~$B$ is monochromatic and not empty}         \\
  &{\tt   Using the description of $f$, find the set~$S_0$ of}\\
  &{\tt  \hspace{2mm} smallest 0-covers and the set~$S_1$ of smallest}\\
  &{\tt  \hspace{2mm}  1-covers}            \\
  &{\tt   Select the first (in lexicographic order) sequence $s=\langle R_1, \ldots R_m,R_{m+1},\ldots R_{m+m'}\rangle$}\\
  &{\tt   \hspace*{2mm}
            where $\langle R_1, \ldots R_m\rangle\in S_0$
            and   $\langle R_{m+1}, \ldots R_{m+m'}\rangle\in S_1$
          } \\
  &{\tt   Select a rectangle~$R_i = A\times B$ from~$s$}      \\
  &{\tt   \hspace*{2mm} \footnotesize{\underline{Comment.} There is at most one such rectangle}
                                                             }\\
  &{\tt   Test if~$\overline{x}\in A$}                       \\
  &{\tt   Send~$p$ to Bob}                                    \\ \hline
  {\bf Bob:} &                                                \\
  &{\tt   Compute $r=p(\overline{y})$}                        \\
  &{\tt   Output $r$}                                         \\ \hline
\end{tabular}
\vspace*{-5mm}
}\end{center}
\caption{\footnotesize
  A family of two-sided non-deterministic protocols~$P$. The guess is based on a program~$p$ that corresponds to~$\ic^{f,t}(\overline{y}:Y_1(\overline{x}))$. Compare with Figure~\ref{side1}. For simplicity we assume that whenever a test fails, the protocol outputs~$\bot$ and halts.}
\label{side2}
\end{figure}
\vspace*{-5mm}
\begin{figure}[h]
\begin{center}
{\footnotesize
\begin{tabular}{|p{8.7 cm}|}                                            \hline
  {\bf Program~$p$, input~$y$:}                              \\ \hline
  {\tt   From the description of $f$ given by the oracle and~$i$:}    \\
  {\tt   \hspace{2mm} 
         Find the set~$S_0$ of smallest 0-covers and}        \\
    {\tt   \hspace{6mm} the set~$S_1$ of smallest 1-covers}  \\
  {\tt   \hspace{2mm} 
         Select the first (in lexicographic order)} \\  
  {\tt   \hspace*{4mm}
           sequence $s=\langle R_1, \ldots R_m,R_{m+1},\ldots R_{m+m'}\rangle$}\\
  {\tt   \hspace*{4mm}
            where $\langle R_1, \ldots R_m\rangle\in S_0$
            and   $\langle R_{m+1}, \ldots R_{m+m'}\rangle\in S_1$
          } \\
  {\tt   \hspace{2mm} 
         Select the $i$th rectangle~$R_i = A\times B$ from~$s$}\\
  {\tt   With input~$y$, output:}                              \\
  {\tt   \hspace{2mm} $p(y)=z$    if ~$y\in B$
         and rectangle $A\times B$ has color~$z\in\{0,1\}$}    \\
  {\tt   \hspace{2mm} $p(y)=\bot$ otherwise}                  \\ \hline
\end{tabular}
\vspace*{-5mm}}
\end{center}
\caption{\footnotesize A possible behavior of the program~$p$ which
  may cause the protocol~$P$ of Figure~\ref{side2} to
  output a value different from~$\bot$. A string~$p$ with this
  behavior can be specified in length~$\log (m+m')$.
}
\label{pside2}
\end{figure}
\vspace*{-3mm}
\section{About individual communication complexity}
\label{kolgo2}
The one sided individual communication complexity satisfies
{\footnotesize
$$
    {\cal N}^1(f,\overline{x},\overline{y}) \geq 
    \icyes^{f,t}(\overline{y}:Y_1(\overline{x}))  + O(1)
$$ 
}
for some constructible time $t$. The complexity
${\cal N}^1(f,\overline{x},\overline{y})$ is obtained from a
minimization over all protocols which must of course ``work
correctly'' for every pair~$(x,y)$ and not only
for~$(\overline{x},\overline{y})$ while no such restriction exists in
the definition of  instance complexity. The individual communication
complexity may in a few rare cases (if~$i$ has a very short
description), be much smaller than~$\log m$.

Finally we present a result relating the individual
non-deterministic communication complexity with the instance
complexity.
\begin{theorem}(Individual upper bound) \label{iccub}
For every function~$f$ and values~$x$ and~$y$ the individual non-deterministic communication complexity~${\cal N}(f,{x},{y})$ satisfies for some constructible time $t$ 

{\footnotesize
$$
{\cal N}(f,{x},{y}) = \ic^{f,t}({y}:Y_1({x})) + O(1) \leq N(f) + O(1) 
$$}
\end{theorem}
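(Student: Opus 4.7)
The plan is to transpose the argument of Theorem \ref{ccgen} from the worst-case pair to an arbitrary fixed pair $(x,y)$, obtaining the equality ${\cal N}(f,x,y) = \ic^{f,t}(y:Y_1(x)) + O(1)$ for a suitable constructible $t$. The terminal inequality ${\cal N}(f,x,y) \leq N(f) + O(1)$ will then follow at once from this equality together with Theorem \ref{ccgen}, since $\ic^{f,t}(y:Y_1(x)) \leq \max_{|x|=|y|=n}\ic^{f,t}(y:Y_1(x)) = N(f) + O(1)$.

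For ${\cal N}(f,x,y) \leq \ic^{f,t}(y:Y_1(x)) + O(1)$ I would specialize the protocol of Figure \ref{side2} to the pair $(x,y)$. Let $p$ be a program of length $\ic^{f,t}(y:Y_1(x))$ corresponding exactly to this instance complexity. Alice receives $p$ as her guess, uses her oracle description of $f$ and her knowledge of $x$ to simulate $p(y')$ for every $y'$ in time $t(|y'|)$, checks that $\{y':p(y')\neq\bot\}$ is consistent with $Y_1(x)$, and sends $p$ to Bob, who outputs $p(y)$. Consistency combined with $p(y)\neq\bot$ forces $p(y)=f(x,y)$, while any incorrect guess is caught by Alice (failure of the time bound or of consistency) or by Bob (output $\bot$), so the total conversation has length $|p|+O(1) = \ic^{f,t}(y:Y_1(x))+O(1)$.

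For the converse $\ic^{f,t}(y:Y_1(x)) \leq {\cal N}(f,x,y)+O(1)$, I would invoke the canonical cover-based optimal protocol of Figure \ref{pside2}: its witness $w$ is the index, in the combined minimum 0/1-cover of $f$, of a monochromatic rectangle $R_w=A_w\times B_w$ containing $(x,y)$, and its conversation carries length $|w|+O(1)$. Reinterpreting $w$ as a program yields the $p$ of Figure \ref{pside2}: on input $y'$ it reconstructs both covers from the oracle, retrieves $R_w$, and outputs the color of $R_w$ when $y'\in B_w$ and $\bot$ otherwise. Monochromaticity of $R_w$ together with $x\in A_w$ gives consistency with $Y_1(x)$, while $y\in B_w$ gives $p(y)\neq\bot$, whence $\ic^{f,t}(y:Y_1(x))\leq |p|=|w|+O(1)$. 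The step I expect to be the main obstacle is justifying that this canonical witness length matches ${\cal N}(f,x,y)$ up to $O(1)$ even though ${\cal N}(f,x,y)$ minimizes over \emph{all} protocols: this requires the standard rectangle property, namely that any conversation of length $k$ on $(x,y)$ induces a monochromatic rectangle of $f$ containing $(x,y)$, hence a cover of size at most $2^k$ in which the rectangle has an index of length at most $k+O(1)$.
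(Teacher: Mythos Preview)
The paper does not supply a proof of Theorem~\ref{iccub}; it is stated bare, with only the preceding paragraph offering an informal reason for the one-sided inequality ${\cal N}^1(f,\overline{x},\overline{y}) \geq \icyes^{f,t}(\overline{y}:Y_1(\overline{x}))+O(1)$. Your forward direction ${\cal N}(f,x,y)\le \ic^{f,t}(y:Y_1(x))+O(1)$, obtained by feeding an optimal instance-complexity program as the guess into the protocol of Figure~\ref{side2} (or its simplification, Figure~\ref{wit-ic-yes}), is correct and is exactly the mechanism the paper relies on throughout; the trailing inequality via Theorem~\ref{ccgen} is likewise fine.

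The gap is in your converse, and it is the one you flag but do not close. You appeal to the canonical cover protocol and then try to argue that its witness length on $(x,y)$ coincides with ${\cal N}(f,x,y)$ up to $O(1)$. Your rectangle argument does not establish this. For an \emph{arbitrary} protocol $P$ achieving ${\cal N}(f,x,y)=k$ with witness $w$ and conversation $c$, the induced monochromatic rectangle is determined by the triple $(P,w,c)$, not by $c$ alone, and the ``cover of size $\le 2^k$'' you invoke is the partition induced by that particular $(P,w)$. An instance-complexity program $p$ has oracle access only to $f$; it has no handle on $P$ or on $w$, so it cannot reconstruct that partition from a $k$-bit index and hence cannot recover the set $B$ from $|c|+O(1)$ bits. (The $2^k$ bound is itself shaky: other inputs may produce longer conversations under $(P,w)$, so the full partition can have more than $2^k$ parts.) To close the gap you would have to either absorb the description of $(P,w)$ into the $O(1)$ --- essentially proving that the program-as-witness protocol of Figure~\ref{wit-ic-yes} is already individually optimal among \emph{all} protocols --- or exhibit a different $|c|+O(1)$-bit description of the rectangle that uses only the oracle $f$. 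Neither step is supplied, and the paper does not supply it either.
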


%
%
%

\section*{Acknowledgments}
{\footnotesize
We thank Sophie Laplante for helpful discussions and suggestions. The authors are partially supported  by KCrypt (POSC/EIA/60819/2004) and funds granted to LIACC  through   the  Programa  de Financiamento Plurianual, Fundaç\~{a}o   para  a  Ciência  e Tecnologia  and Programa POSI.
}

\end{document}